
\documentclass[pra,twocolumn,showpacs,preprintnumbers,amsmath,amssymb]{revtex4}
\usepackage{graphicx}
\usepackage{dcolumn}
\usepackage{bm}
\usepackage{amsthm}
\usepackage{amscd}
\usepackage{epsfig}
\usepackage{color}
\input xy
\xyoption{all}

\def\wt{\mbox{\rm wt}}
\newcommand{\abs}[1]{\left| #1 \right|}
\newcommand{\ket}[1]{\left| #1 \right\rangle}
\newcommand{\bra}[1]{\left\langle #1 \right |}
\newtheorem{theorem}{Theorem}
\newtheorem{lemma}{Lemma}
\newtheorem{proposition}{Proposition}

\DeclareMathOperator{\tr}{tr}

\DeclareMathOperator{\Stab}{Stab}

\DeclareMathOperator{\Sym}{Sym}

\newcommand{\R}{{\mathbb R}}

\newcommand{\C}{{\mathbb C}}

\newcommand{\Id}{{\rm Id}}
\newcommand{\GHZ}{{\rm G}}
\newcommand{\kghz}{K_{\rm \small GHZ}}
\newcommand{\kwerner}{K_{\rm \small Werner}}
\newcommand{\kproduct}{K_{\rm \small product}}
\newcommand{\kdicke}{K_{\rm \small Dicke}}

\newcommand{\twotwo}[4]{\left[ \begin{array}{cc} #1 & #2 \\
                        #3 & #4 \end{array} \right]}

\begin{document}

\title{Symmetric mixed states of $n$ qubits: local unitary stabilizers and
  entanglement classes}

\author{David W. Lyons}
  \email{lyons@lvc.edu}
\author{Scott N. Walck}
  \email{walck@lvc.edu}
\affiliation{Lebanon Valley College, Annville, PA 17003}

\date{7 July 2011, revised 4 October 2011}

\begin{abstract}
We classify, up to local unitary equivalence, local unitary stabilizer
Lie algebras for symmetric mixed states of $n$ qubits into six
classes. These include the stabilizer types of the Werner states, the
GHZ state and its generalizations, and Dicke states. For all but the
zero algebra, we classify entanglement types (local unitary equivalence
classes) of symmetric mixed states that have those stabilizers. We make
use of the identification of symmetric density matrices with polynomials
in three variables with real coefficients and apply the representation
theory of $SO(3)$ on this space of polynomials.
\end{abstract}

\pacs{03.67.Mn}

\maketitle

\section{Introduction}\label{introsection}

Quantum information seeks to exploit quantum states as resources for
computational tasks, where the phenomenon of entanglement seems to play
a central role in quantum advantages over classical protocols. A problem
at the heart of the study of entanglement is to understand the essential
differences between local and global unitary operations on composite
systems. Questions such as the following arise naturally. How can one
characterize the nonlocal properties of a given state? Given an input
state, what is a reasonable description of the set of states to which it
can be converted using only local operations?  These questions are known
to be hard. Nonlocal properties are poorly
understood~\cite{RevModPhys.81.865}. Regarding the second question, the
number of polynomial invariants needed to determine local unitary
equivalence classes of multiqubit states grows exponentially with the
number of qubits~\cite{meyer01p}.

The considerations of the previous paragraph motivate the search for
entanglement measures and local unitary invariants that, being more
tractable than a complete set of polynomial invariants, will provide
a coarser grained classification, yet still fine enough to distinguish
useful states and good entanglement properties. We pursue this
philosophy with the following scheme. Given an $n$-qubit density matrix
$\rho$, its local unitary stabilizer subgroup is defined to be all the
local transformations of state space that leave $\rho$ invariant. If
$\rho,\rho'$ are local unitary equivalent, say by a local unitary
operation $U$, then their stabilizers are isomorphic via conjugation by
$U$~\cite{symmstates2}. Thus the conjugacy class of the stabilizer
subgroup of a state is a local unitary invariant of that state. This
invariant is practical and has proven to distinguish states with
known useful entanglement properties. Tractability is achieved by virtue
of the Lie group structure of the local unitary group and stabilizer
subgroups. Lie groups admit analysis through their Lie algebras, which
are their tangent spaces of infinitesimal local operations. In previous work, we have
classified conjugacy classes of stabilizers and local unitary classes of
states including the singlet, GHZ and its generalizations, symmetric
Dicke states (including the W state and its generalizations), and Werner
states~\cite{maxstabnonprod1,maxstabnonprod2,su2blockstates,symmstatespaper,symmwerner}.

In this article, we consider the class of symmetric mixed states of
$n$-qubit systems. Symmetric states, that is, states that are invariant
under permutation of subsystems, are the subject of recent work
including: geometric measure of entanglement
\cite{aulbach2010,aulbach2010b,markham2010,PhysRevA.82.032301,1367-2630-12-8-083002},
efficient tomography \cite{toth2010}, classification of states
equivalent under stochastic local operations and classical communication
(SLOCC) \cite{bastin2009,bastin2010,PhysRevLett.105.200501}, and our own
work on classification of pure symmetric states equivalent under local unitary (LU)
transformations \cite{symmstatespaper,symmstates2}. 

The main results of this paper are: a classification of the six local
unitary stabilizer subalgebras (Lie algebras of the local unitary
stabilizer subgroups) in Theorem~\ref{mixedsymmkrhodecomp}; and for each
of those algebras (except for the zero algebra), a classification of
local unitarily distinct classes of states. In addition,
Theorem~\ref{lualgdecomp} gives a structure theorem for
stabilizer subalgebras of mixed states that generalizes a similar result
for pure states in our earlier work~\cite{maxstabnonprod1}.

The paper is organized as follows. We establish notation and convention
in Section~\ref{prelimsection}. Section~\ref{symmstate2poly} presents
the identification of symmetric mixed states with polynomials in 3
variables and the basics of $SO(3)$ representation theory (already given
in our previous work~\cite{symmwerner} but reproduced here for the sake
of self containment) needed for the analysis of stabilizers and their
corresponding states. We give structure theorems for subalgebras of the
local unitary algebra in general (Theorem~\ref{lualgdecomp}), and
stabilizer algebras of symmetric mixed states in particular
(Theorem~\ref{mixedsymmkrhodecomp}), in
Section~\ref{algstructsection}. Proofs of these theorems are given in
the appendix. Five subsequent sections analyze and classify entanglement
types of symmetric mixed states (local unitary equivalence classes) for
five of the six stabilizer types (all but the zero algebra) given in
Theorem~\ref{mixedsymmkrhodecomp}.

\section{Preliminaries}\label{prelimsection}

Let $G=(SU(2))^n$ denote the $n$-qubit local unitary group, where
$g=(g_1,g_2,\ldots,g_n) \in G$ acts on a density matrix $\rho$ by
$$ g\rho g^\dagger := (g_1\otimes g_2\otimes \cdots \otimes g_n) \rho
(g_1\otimes g_2\otimes \cdots \otimes g_n)^\dagger. 
$$ Let $LG=\bigoplus_{i=1}^n su(2)$ denote the local unitary Lie algebra
of infinitesimal transformations, where $su(2)$ is the set of $2\times
2$ skew-Hermitian matrices with trace zero.  In order to study the
actions of $G$ and $LG$ on density matrices, it is convenient to
consider a larger vector space that contains the set of density matrices
as a proper subset.  Let ${\cal W}$ denote the 4-dimensional real
vector space of $2 \times 2$ Hermitian matrices.  A convenient basis for
${\cal W}$ is $\{ \sigma_0, \sigma_1, \sigma_2, \sigma_3 \}$, where $\sigma_0$
is the $2 \times 2$ identity matrix, and $\sigma_1=\sigma_x$,
$\sigma_2=\sigma_y$, and $\sigma_3=\sigma_z$ are the Pauli matrices.

The set of $n$-qubit density matrices is a proper subset of the vector space ${\cal
  W}^{\otimes n}$, where every element $\rho$ (whether or not $\rho$ is
positive or has trace 1) can be uniquely written in the form $\rho=\sum_I
s_I\sigma_I$, where $I=i_1i_2\ldots i_n$ is a multiindex with
$i_k=0,1,2,3$ for $1\leq k\leq n$, and $\sigma_I$ denotes
$$\sigma_I = \sigma_{i_1}\otimes \sigma_{i_2}\otimes \cdots \otimes
\sigma_{i_n},$$
and the coefficients $s_I$ are real. 
When it is necessary to express a density matrix in terms of the
computational basis, we use the notation $\rho=\sum_{I,J}\rho_{I,J}\ket{I}\bra{J}$,
where $I=i_1i_2\ldots i_n$ and $J=j_1j_2\ldots j_n$ are bit strings of
length $n$. For $I=i_1i_2\cdots i_n$, we write $I_k$ to denote the
string $i_1i_2\cdots i_k^c \cdots i_n$ obtained by complementing the
$k$th index, where $i_k^c = i_k + 1 \pmod 2$. Note that the same multiindex notation $I=i_1i_2\ldots i_n$ means
$i_k=0,1,2,3$ for the Pauli tensor expansion, and means $i_k=0,1$ for the
computational basis expansion, where context makes the meaning clear.

An element $g\in SU(2)$ acts~\footnote{When we identify the element
  $a\sigma_x+b\sigma_y+c\sigma_z$ in ${\cal W}$ with
  $i(a\sigma_x+b\sigma_y+c\sigma_z)$ in $su(2)$,
  equation~(\ref{Adaction}) is the adjoint action of $SU(2)$ on its Lie
  algebra $su(2)$, plus a trivial action on the real linear span of
  $\sigma_0$. Equation~(\ref{adaction}) is the corresponding adjoint
  action of the Lie algebra $su(2)$ on itself, plus again the trivial
  action on $\sigma_0$. Equations~(\ref{Adactionext})
  and~(\ref{adactionext}) are the natural extensions of~(\ref{Adaction})
  and~(\ref{adaction}) to tensor products. See, for example,
  \cite{b-td:rclg}.}  on $\sigma_i$ by
\begin{equation}\label{Adaction}
\sigma_i\mapsto g\sigma_i g^\dagger
\end{equation}
and the corresponding action of $M$ in $su(2)$ on $\sigma_i$ is 
\begin{equation}\label{adaction}
\sigma_i\mapsto [M,\sigma_i]=M\sigma_i - \sigma_iM.
\end{equation}
An element $g=(g_1,g_2,\ldots,g_n)$ in
$G$ acts on $\sigma_I$ by
\begin{equation}\label{Adactionext}
g\sigma_I g^\dagger := (g_1\sigma_{i_1}g_1^\dagger)\otimes \cdots
\otimes (g_n\sigma_{i_n}g_n^\dagger)
\end{equation}
and $M=(M_1,M_2,\ldots,M_n)$ in $LG$ acts on $\sigma_I$ by 
\begin{equation}\label{adactionext}
  [M,\sigma_I] := \sum_{k=1}^n \sigma_{i_1}\otimes  \cdots \otimes [M_k,\sigma_{i_k}]\otimes\cdots
\otimes \sigma_{i_n}
\end{equation}

The local unitary stabilizer of $\rho\in {\cal W}^{\otimes n}$
is the subgroup $\Stab_\rho$ of $G$ given by
$$\Stab_\rho=\{g\in G\colon g\rho g^\dagger = \rho\}.
$$
The corresponding local unitary stabilizer subalgebra, which we denote
by $K_\rho$, is the Lie algebra of $\Stab_\rho$, given by
$$K_\rho = \{M\in LG\colon [M,\rho]=0\}.
$$

We will use the standard basis
$$A=i\sigma_z, \; B=i\sigma_y, \; C=i\sigma_x$$
for $su(2)$.
Given an element $M\in su(2)$,
we will write $M^{(k)}$ to denote the element $(0,0,\ldots,M,\ldots,0)$ in $LG$ with
$M$ in the $k$th position and zero matrices in all other positions.

We will make use of the following elementary 1-qubit Lie algebra actions on Pauli
tensors.
\begin{eqnarray}
\left[A,\sigma_x\right] &=& -2\sigma_y\\
\left[A,\sigma_y\right] &=& 2\sigma_x\\
\left[A,\sigma_z\right] &=& 0  
\end{eqnarray}
Less trivial, but still elementary to check, are the following Lie
algebra action calculations in standard coordinates~\cite{maxstabnonprod2}.
\begin{eqnarray}
\left[\left(\sum_k a_kA^{(k)}\right), \rho \right]&=& \sum_{I,J} 
\zeta(I,J) \ket {I}\bra{J} \label{sumAaction}\\ 
\left[\left(\sum_k c_kC^{(k)}\right), \rho\right] &=& \sum_{I,J} 
\eta(I,J) \ket{I}\bra{J} \label{sumCaction}
\end{eqnarray}
where $\zeta(I,J),\eta(I,J)$ are given by
\begin{eqnarray*}
\zeta(I,J) &=&  2i \rho_{I,J} \sum_{i_\ell\neq j_\ell}
(-1)^{i_\ell}a_\ell\\
\eta(I,J) &=& i\sum_{k,\ell=1}^n
(c_k \rho_{I_k,J} - c_\ell\rho_{I,J_\ell}).
\end{eqnarray*}


\section{Symmetric states and polynomials}\label{symmstate2poly}


Given a permutation $\pi$ of $\{1,2,\ldots,n\}$, define $P_\pi$ to be the operator
on Hilbert space that carries out the corresponding permutation of qubits.
For example,
\begin{align*}
P_{(23)} \ket{11000} &= \ket{10100} \\
P_{(23)} \left( \sigma_j \otimes \sigma_k \otimes \sigma_l \right) P_{(23)}^{-1}
 &= \sigma_j \otimes \sigma_l \otimes \sigma_k
\end{align*}
Define a symmetrization operator on density matrices as
\[
\Sym(\rho) = \frac{1}{n!} \sum_{\pi} P_\pi \rho P_\pi^{-1} .
\]
An $n$-qubit symmetric density matrix $\rho$
is one for which $P_\pi \rho P_\pi^{-1} = \rho$ for all permutations $\pi$,
or equivalently, one for which $\Sym(\rho) = \rho$.
We denote by $\Sym^n {\cal W}$ the $n$-fold symmetric power of ${\cal
  W}$ defined in the previous section.  It is the subspace
of elements of ${\cal W}^{\otimes n}$ that are invariant under qubit permutation.
Every $n$-qubit symmetric density matrix $\rho$ is an element
of $\Sym^n {\cal W}$, and can be written
\begin{equation}\label{symrho}
\rho = \frac{1}{2^n} \sum c_{n_1n_2n_3} \Sym \left(
  \sigma_0^{\otimes n_0} \otimes \sigma_1^{\otimes n_1} \otimes
  \sigma_2^{\otimes n_2} \otimes \sigma_3^{\otimes n_3} \right) ,
\end{equation}
where the sum is over non-negative integers $n_0$, $n_1$, $n_2$, and $n_3$
such that $n_0 + n_1 + n_2 + n_3 = n$.
The coefficients $c_{n_1n_2n_3}$ are real.
The collection of $n$-qubit symmetric density matrices is a proper subset of
$\Sym^n {\cal W}$, since the latter contains
Hermitian matrices that are not positive semi-definite, and Hermitian matrices for
which the trace is not 1.

Let $\R_n[x,y,z]$ be the set of polynomials of degree at most $n$
in three variables $x$, $y$, and $z$ with
real coefficients.
For each $n$, there is a linear map $F_n: \Sym^n {\cal W} \to \R_n[x,y,z]$ defined by
\[
\frac{1}{2^n} \Sym \left(
  \sigma_0^{\otimes n_0} \otimes \sigma_1^{\otimes n_1} \otimes
  \sigma_2^{\otimes n_2} \otimes \sigma_3^{\otimes n_3} \right) \mapsto
x^{n_1} y^{n_2} z^{n_3} .
\]
In this way, we may associate a polynomial of degree at most $n$ with each
$n$-qubit symmetric mixed state.  The polynomial associated with
(\ref{symrho}) is
\begin{equation}
F_n(\rho) = \sum_{\stackrel{n_1,n_2,n_3}{n_1+n_2+n_3 \leq n}} c_{n_1n_2n_3} x^{n_1} y^{n_2} z^{n_3} .
\end{equation}
For each $n$, the map $F_n$ is an invertible linear map.


Since $F_n$ is a linear map, the polynomial for a mixture of symmetric mixed states
is the mixture of the polynomials.
\[
F_n(p_1 \rho_1 + p_2 \rho_2) = p_1 F_n(\rho_1) + p_2 F_n(\rho_2)
\]

A product of polynomials $\R_n[x,y,z] \times \R_m[x,y,z] \to \R_{n+m}[x,y,z]$
represents the symmetrized tensor product of states.
\[
F_n(\rho_1) F_m(\rho_2) = F_{n+m}(\Sym(\rho_1 \otimes \rho_2))
\]

Let $g \in SU(2)$.  Define $T_g: \Sym^n {\cal W} \to \Sym^n {\cal W}$ to
be the symmetric transformation of each qubit by $g$.
\[
T_g(\rho) = g^{\otimes n} \rho (g^{\dagger})^{\otimes n}
\]
If we define $R_g: \R_n[x,y,z] \to \R_n[x,y,z]$ to be the transformation on polynomials
defined by
\begin{equation}\label{su2actiononpolys}
R_g(f)(x,y,z) = f \left( (x,y,z) \Phi(g) \right) ,  
\end{equation}
where $f$ is a polynomial, $\Phi: SU(2) \to SO(3)$ is the
homomorphism~\footnote{$\Phi$ is given in a natural way by the adjoint
  action $SU(2)\to SO(su(2))$, so that $\Phi(g)(M) = gMg^\dagger$, and we
  identify $su(2)$ with $\R^3$ by $A\leftrightarrow (1,0,0)$,
  $B\leftrightarrow (0,1,0)$, $C\leftrightarrow
  (0,0,1)$. See~\cite{b-td:rclg}.  } that associates a rotation in
$\R^3$ with each $2 \times 2$ unitary, and $(x,y,z) \Phi(g)$ denotes a
row vector multiplied by a $3 \times 3$ orthogonal matrix, then the
following diagram commutes.
\[
\begin{CD}
\R_n[x,y,z] @>R_g>> \R_n[x,y,z] \\
  @AF_nAA           @AF_nAA \\
\Sym^n {\cal W}  @>T_g>> \Sym^n {\cal W}  \\
\end{CD}
\]

Homogeneous polynomials in three variables $x$, $y$, and $z$ are known
to be reducible representations of $SO(3)$ \cite{sternberg}\footnote{To
  be precise, the cited work considers the representation
  $\C[x,y,z]$. In the case of $SO(3)$, the irreducible submodules of
  this complex representation are in one-to-one correspondence with the
  {\em real} irreducible submodules of $\R[x,y,z]$ via
  complexification. See~\cite[Ch.2 Sec.6]{b-td:rclg}.}.  If $V_l$ is the
irreducible representation of $SO(3)$ with dimension $2l+1$, then the
homogeneous polynomials of degree $p$ in three variables decompose into
irreducible representations as
\begin{equation}
\bigoplus_{j=0}^{\lfloor p/2 \rfloor} V_{p-2j} .
\label{homopolyirred}
\end{equation}

\section{Stabilizer structure for symmetric mixed states}\label{algstructsection}

We begin by considering an arbitrary subalgebra $K$ (not necessarily a
stabilizer subalgebra) of $LG=\bigoplus_{i=1}^n su(2)$. That is, $K$ is
a vector subspace of $LG$ that is also closed under the Lie
bracket. Theorem~\ref{lualgdecomp} below (the proof is given in the
appendix) gives a structure theorem for how $K$ decomposes as a direct
sum of basic building blocks. It is convenient to give names to the
following standard algebra types.


Given a subset ${\cal B}\subset\{1,2,\ldots,n\}$ of qubit labels for a
system of $n$ qubits, let $\Delta_{\cal B}$ denote the subalgebra
$$\displaystyle \Delta_{\cal B} = \left\{ \sum_{b\in {\cal B}} M^{(b)} \colon M\in su(2)\right\}
$$ of $LG$. In particular, note that $\Delta_{\cal B}$ is isomorphic to
$su(2)$. We will call $\Delta_{\cal B}$ the {\em standard $su(2)$ block
  subalgebra} for the subsystem specified by ${\cal B}$, and we will use
the term {\em $su(2)$ block algebra} for any algebra that is isomorphic
to some $\Delta_{\cal B}$ via local unitary group conjugation, that is, any
algebra of the form
$$ (g_1\otimes g_2\otimes \cdots \otimes g_n) \Delta_{\cal B}
(g_1\otimes g_2\otimes \cdots \otimes g_n)^\dagger
$$
where $g_i\in SU(2)$ for $1\leq i\leq n$. We will write $\kwerner$ to
denote $\Delta_{\{1,2,\ldots,n\}} = \{(M,M,\ldots,M)\colon M\in
su(2)\}$, because it is the Lie algebra of
the stabilizer group $\{(g,g,\ldots,g)\colon g\in SU(2)\}$
that defines the Werner states. 

We write $\kproduct$ to denote the $n$-dimensional algebra
$$ \kproduct = \left\{ \sum_{k=1}^n a_k A^{(k)} \colon a_k \in \R
\right\}
$$
and we write $\kdicke$ to denote the $1$-dimensional algebra that is the
real span of the element $\displaystyle \sum_{k=1}^n A^{(k)}$. We use 
these names because $\kproduct$ is the stabilizer of the product state 
$\ket{00\cdots 0}\bra{00\cdots 0}$, and $\kdicke$ is the stabilizer of
the symmetric Dicke states~\cite{symmstatespaper}. 
Finally, let $K_{\rm \small GHZ}$ denote the $(n-1)$-dimensional algebra
$$K_{\rm \small GHZ} = \left\{\sum_{i=1}^n a_k A^{(k)}\colon a_k\in \R, \sum_k a_k = 0\right\}
$$
which is isomorphic to $u(1)^{(n-1)}$. The name of this algebra comes from the
fact that it is the stabilizer of the GHZ state\cite{maxstabnonprod2}.

\begin{theorem}\label{lualgdecomp}
Let $K$ be a subalgebra of $LG=\bigoplus_{i=1}^n su(2)$.  The set
$\{1,2,\ldots,n\}$ decomposes as a disjoint union
$$\{1,2,\ldots,n\} = {\cal B}_1 \cup {\cal B}_2 \cup \cdots \cup {\cal
  B}_p \cup {\cal S} \cup {\cal R}$$
such that the algebra $K$ decomposes
as a direct sum of algebras
$$K=\left(\bigoplus_{i=1}^p B_i\right) \oplus S$$ 
with the following properties:
\begin{enumerate}
\item [(i)] each $B_i$ is an $su(2)$ block subalgebra in qubits ${\cal B}_i$,
\item [(ii)] every element in the algebra $S$ has zero coordinates in
  qubits outside of the qubit set ${\cal S}$,
\item [(iii)] for every $s$ in ${\cal S}$, the set $\{M_s\colon
  \sum_k M_k^{(k)}\in S\}$ is a 1-dimensional subalgebra of $su(2)$, and
\item [(iv)] every element in $K$ has zero coordinates in qubits in
  ${\cal R}$.
\end{enumerate}
\end{theorem}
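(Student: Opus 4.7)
The approach is to invoke the Levi decomposition $K=K_{ss}\oplus \mathrm{rad}(K)$, identify the simple summands of the Levi factor $K_{ss}$ with the $su(2)$ block algebras $B_i$, show that the radical is abelian and supported on qubits disjoint from the block supports, and then read the partition of $\{1,2,\ldots,n\}$ off from those supports. Throughout, let $\pi_k\colon LG\to su(2)$ denote the projection onto the $k$th coordinate.

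For the semisimple part, write $K_{ss}=\mathfrak{s}_1\oplus\cdots\oplus\mathfrak{s}_p$ as a sum of simple ideals. Any simple subalgebra of $\bigoplus_{k=1}^n su(2)$ must be isomorphic to $su(2)$: picking any $k$ for which the projection is nonzero, $\pi_k$ is injective on the simple algebra (its kernel is a proper ideal), and the nonzero image cannot be a $1$-dimensional (hence abelian) subalgebra of $su(2)$, so it is all of $su(2)$. Set ${\cal B}_i=\{k\colon \pi_k(\mathfrak{s}_i)\neq 0\}$; for each $k\in{\cal B}_i$ the map $\pi_k|_{\mathfrak{s}_i}$ is then an isomorphism onto $su(2)$, and since every automorphism of $su(2)$ is inner, there exists $(g_1,\ldots,g_n)\in(SU(2))^n$ (with $g_k=I$ for $k\notin{\cal B}_i$) conjugating the standard block $\Delta_{{\cal B}_i}$ onto $\mathfrak{s}_i$; thus $\mathfrak{s}_i$ is an $su(2)$ block algebra $B_i$. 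Disjointness of the ${\cal B}_i$ follows because $[\mathfrak{s}_i,\mathfrak{s}_j]=0$ for $i\neq j$: a common qubit $k$ would force $[\pi_k(\mathfrak{s}_i),\pi_k(\mathfrak{s}_j)]=[su(2),su(2)]=0$, contradicting non-abelianity.

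The main obstacle is to show that the radical $S:=\mathrm{rad}(K)$ has zero projection on every ${\cal B}_i$. A preliminary observation is that any solvable subalgebra of $\bigoplus_{k=1}^n su(2)$ is in fact abelian, since its image under each $\pi_k$ is a solvable subalgebra of $su(2)$, hence at most $1$-dimensional, making componentwise brackets vanish. Now suppose, toward contradiction, that $X\in S$ satisfies $\pi_k(X)\neq 0$ for some $k\in{\cal B}_i$. Since $S$ is an ideal of $K$, $[Y,X]\in S$ for every $Y\in\mathfrak{s}_i$; projecting gives $\pi_k([Y,X])=[\pi_k(Y),\pi_k(X)]$, and as $\pi_k(Y)$ ranges over all of $su(2)$ this bracket sweeps out the $2$-dimensional subspace of $su(2)$ orthogonal to $\pi_k(X)$. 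Thus $\pi_k(S)$ contains a $2$-dimensional linear subspace of $su(2)$, contradicting the fact that $\pi_k(S)$ is an abelian subalgebra of $su(2)$ and so at most $1$-dimensional.

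To finish, set ${\cal S}=\{k\colon \pi_k(S)\neq 0\}$ and let ${\cal R}$ be the remaining qubits. By the previous paragraph, ${\cal S}$ is disjoint from every ${\cal B}_i$, so the partition is valid; properties (ii), (iii), and (iv) then hold by construction of ${\cal S}$ and ${\cal R}$ (using that each $\pi_s(S)$ is a nonzero abelian subalgebra of $su(2)$, hence $1$-dimensional), and (i) was established above. Because the supports of the $\mathfrak{s}_i$'s and $S$ are pairwise disjoint subsets of $\{1,\ldots,n\}$, these subalgebras pairwise commute coordinatewise, and the a priori semidirect Levi decomposition collapses to a direct sum of Lie algebras $K=\bigl(\bigoplus_{i=1}^p B_i\bigr)\oplus S$, as required.
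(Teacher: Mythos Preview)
Your proof is correct and takes a genuinely different route from the paper's. The paper works entirely by hand: for each qubit $i$ with $\dim\pi_i(K)=3$ it studies the set $M(i)$ of minimum-weight elements of $K$ supported at $i$, explicitly constructs an orthonormal $su(2)$-triple $U,V,W$ inside $K$, and uses coordinatewise bracket identities to show that the span of $U,V,W$ is an $su(2)$ block and a direct summand of $K$ (Propositions~\ref{miproj3samenonzeroposn} and~\ref{su2blockproperties}); the remaining qubits then have projection dimension $0$ or $1$ because a two-dimensional projection would bracket up to a third dimension. You instead invoke the Levi--Malcev theorem, identify each simple factor of the Levi subalgebra with an $su(2)$ block via the fact that every automorphism of $su(2)$ is inner, and control the radical through the clean observation that solvable subalgebras of $\bigoplus_k su(2)$ are automatically abelian with one-dimensional projections. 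Your approach is shorter and more structural, and incidentally yields the bonus that $S$ is abelian; the paper's approach is more elementary, needing nothing beyond the cross-product picture $su(2)\cong\R^3$, which keeps the argument self-contained and produces the blocks constructively rather than by appeal to classification.
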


Applying Theorem~\ref{lualgdecomp} to the stabilizer subalgebra
$K_\rho$, where $\rho$ is symmetric, leads to the following theorem,
whose proof is in the appendix.

\begin{theorem}\label{mixedsymmkrhodecomp}
  Let $\rho$ be a symmetric mixed $n$-qubit state. There is an LU
  equivalent symmetric state $\rho'$ such that the local unitary
  stabilizer subalgebra $K_{\rho'}$ is one of the following.
  \begin{enumerate}
  \item[(a)] all of $LG$
\item[(b)] $\kwerner$
\item[(c)] $\kproduct$
\item[(d)] $\kghz$
\item[(e)] $\kdicke$
\item[(f)] the zero algebra
  \end{enumerate}
\end{theorem}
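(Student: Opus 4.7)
The plan is to apply the structure theorem (Theorem~\ref{lualgdecomp}) to $K_\rho$, then use the permutation symmetry of $\rho$ to pin down which of the allowed decompositions can occur, and finally apply a diagonal (symmetry-preserving) LU to put the answer into one of the six listed standard forms.

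First I would observe that $P_\pi \rho P_\pi^{-1} = \rho$ implies that $K_\rho$ is invariant under the natural $S_n$ action on $LG$ that permutes summands. The three qubit sets $\mathcal{B} = \bigcup_i \mathcal{B}_i$, $\mathcal{S}$, and $\mathcal{R}$ in the decomposition of Theorem~\ref{lualgdecomp} are intrinsic to $K_\rho$ — they are characterized by whether the qubit-$k$ coordinate projection of $K_\rho$ has dimension $3$, $1$, or $0$ — hence each is $S_n$-invariant, and by transitivity of $S_n$ on $\{1,\ldots,n\}$ exactly one of them equals the full set.

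The case $\mathcal{R} = \{1,\ldots,n\}$ gives $K_\rho = 0$, case (f). For $\mathcal{B} = \{1,\ldots,n\}$, $S_n$-invariance of the sub-partition $\{\mathcal{B}_i\}$ plus $2$-transitivity of $S_n$ forces either $p = n$ singleton blocks (each is all of $su(2)$ on one qubit, so $K_\rho = LG$, case (a)) or $p = 1$ block of size $n$. In the one-block case $K_\rho = \{(g_1 M g_1^\dagger, \ldots, g_n M g_n^\dagger) : M \in su(2)\}$, and I would show via a transposition-$(k\,l)$ adjoint calculation that $S_n$-invariance forces $\mathrm{Ad}(g_k^{-1}g_l) = \Id$ on $su(2)$, i.e.\ $g_k = \pm g_l$ for all $k,l$; conjugating $\rho$ by the diagonal LU $(g_1^{-1}, \ldots, g_1^{-1})$ then preserves symmetry and produces $K_{\rho'} = \kwerner$ (the global signs cancel in the conjugation and in the algebra), giving case (b).

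For $\mathcal{S} = \{1,\ldots,n\}$, property (iii) of Theorem~\ref{lualgdecomp} gives a $1$-dimensional subalgebra $\mathfrak{h}_k \subseteq su(2)$ for each qubit $k$. Projecting a permuted element of $K_\rho$ to the $k$-th slot shows $\mathfrak{h}_{\pi^{-1}(k)} \subseteq \mathfrak{h}_k$, and reversing $\pi$ gives equality, so all $\mathfrak{h}_k$ coincide in a single subspace $\mathfrak{h}$. Choose $h \in SU(2)$ with $h\mathfrak{h}h^\dagger = \R A$ and conjugate by the diagonal LU $(h,\ldots,h)$; the result is symmetric and $K_{\rho'}$ is a subspace $V$ of $\kproduct$, which I identify with a subspace of $\R^n$. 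The permutation representation of $S_n$ on $\R^n$ splits over $\R$ into trivial $\oplus$ standard, both irreducible, so the $S_n$-invariant subspaces are exactly $\{0\}$, $\R(1,\ldots,1)$, $\{a : \sum a_k = 0\}$, and $\R^n$; property (iii) rules out $\{0\}$, leaving cases (e), (d), (c). The main obstacle I foresee is the one-block subcase of (b), where the structure theorem presents the block in a twisted form and one needs the conjugating LU to be essentially diagonal so that symmetry survives; the transposition calculation forcing $g_k = \pm g_l$ is exactly what makes the diagonal conjugation work.
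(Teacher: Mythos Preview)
Your overall strategy matches the paper's: apply Theorem~\ref{lualgdecomp}, use permutation invariance to constrain the decomposition, then conjugate by a suitable LU. Your treatment of the $S$ case via the $S_n$-representation theory of $\R^n$ is a genuinely nicer variant of the paper's hands-on case analysis (weight-one elements present or not, $M_i = M_j$ for all $i,j$ or not), and your intrinsic characterization of $\mathcal{B},\mathcal{S},\mathcal{R}$ by $\dim\pi_k(K_\rho)\in\{3,1,0\}$ is exactly what the paper uses implicitly.

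However, your transposition argument in the single-block subcase of~(b) has a genuine gap at $n=2$. With $K_\rho=\{(g_1Mg_1^\dagger,\ldots,g_nMg_n^\dagger):M\in su(2)\}$ and $\pi=(k\,l)$, $S_n$-invariance says the swapped tuple equals $(g_1M'g_1^\dagger,\ldots,g_nM'g_n^\dagger)$ for \emph{some} $M'$. You need a third index $j\neq k,l$ to read off $M'=M$ from the $j$-th slot; only then does the $k$-th slot give $\mathrm{Ad}(g_k^{-1}g_l)=\Id$. For $n=2$ no such $j$ exists, and the two equations collapse to $\mathrm{Ad}(g_1^{-1}g_2)^2=\Id$, i.e.\ $\Phi(g_1^{-1}g_2)$ is an involution in $SO(3)$, not necessarily the identity. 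The paper gives a concrete witness: the density matrix of $(\ket{01}+\ket{10})/\sqrt 2$ is permutation-invariant, and its stabilizer is the $su(2)$ block $\{(M,\,(i\sigma_z)M(i\sigma_z)^\dagger):M\in su(2)\}$, which is \emph{not} $\kwerner$. No diagonal LU fixes this; the paper handles $n=2$ by a separate argument (SVD of the $3\times 3$ correlation matrix, then a $\pi$-rotation on one qubit) and explicitly remarks that the LU carrying $\rho$ to a symmetric $\rho'$ with $K_{\rho'}=\kwerner$ need not act identically on both qubits. You should add this $n=2$ case to your write-up; the anticipated obstacle you flagged is real, and the transposition calculation does not dispose of it when there are only two qubits.
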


In the following sections, we identify states that have
the stabilizers given in Theorem~\ref{mixedsymmkrhodecomp}. For
stabilizer types~(c), (d), and~(e), we will make use of the following
Lemma, whose proof is in the appendix.

\begin{lemma}\label{gAgdaggerispmA}
Suppose that $K_\rho$ is a stabilizer of type~(c), (d), or~(e), in
Theorem~\ref{mixedsymmkrhodecomp}, and that 
$$(g_1\otimes g_2\otimes \cdots \otimes g_n)K_\rho (g_1\otimes
g_2\otimes \cdots \otimes g_n)^\dagger = K_\rho
$$
for some $g_1,\ldots,g_n$ in $SU(2)$. Then we have 
$g_kAg_k^\dagger = \pm A$ for $1\leq k\leq n$. It follows that each $g_k$ is
either diagonal or antidiagonal. That is, $g_k$ is either of the form
$g_k=\twotwo{e^{it}}{0}{0}{e^{-it}}$ or
$g_k=\twotwo{0}{-e^{-it}}{e^{it}}{0}$ for some real $t$.
\end{lemma}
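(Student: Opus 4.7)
The plan is to exploit that each of the three algebras is generated by simple sums of $A^{(k)}$'s, so that the hypothesis $g K_\rho g^\dagger = K_\rho$ forces the conjugated generators to remain inside $K_\rho$; inspection of the allowed form in $K_\rho$ then pins down what $g_k A g_k^\dagger$ can be, one qubit at a time. The translation from $g_k A g_k^\dagger = \pm A$ to the diagonal/antidiagonal normal form is a routine computation with $2\times 2$ $SU(2)$ matrices.

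In detail, for any $M=\sum_k a_k A^{(k)}$ in $K_\rho$, equations~(\ref{Adactionext}) give
\[
g M g^\dagger = \sum_k a_k (g_k A g_k^\dagger)^{(k)}.
\]
For type~(c), each $A^{(k)}$ itself lies in $\kproduct$, so $(g_k A g_k^\dagger)^{(k)} \in \kproduct$, which forces $g_k A g_k^\dagger \in \R A$ for every $k$. For type~(d) (with $n\geq 2$), taking $M = A^{(k)}-A^{(\ell)}$ for arbitrary $k\neq \ell$ shows that the conjugated element must have the form $a_k A^{(k)} + a_\ell A^{(\ell)}$ with $a_k+a_\ell=0$; the only way $g_k A g_k^\dagger$ and $-g_\ell A g_\ell^\dagger$ can occupy those coordinates is for each to be a real multiple of $A$. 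For type~(e), conjugating the single generator $\sum_k A^{(k)}$ must yield a real multiple of itself, so the coordinate $g_k A g_k^\dagger$ in position $k$ lies in $\R A$ for all $k$ (in fact with a common coefficient, which is stronger than we need).

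Now the adjoint action of $SU(2)$ on $su(2)$ is an isometry with respect to the Killing form (equivalently, for the norm $\|M\|^2 = -\tr(M^2)/2$), and $A$ has unit norm. Since $g_k A g_k^\dagger = c_k A$ for some $c_k \in \R$, norm preservation gives $c_k = \pm 1$, i.e.\ $g_k A g_k^\dagger = \pm A$ for each $k$, as claimed.

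Finally, write $g_k = \twotwo{a}{b}{-\overline{b}}{\overline{a}}$ with $|a|^2+|b|^2=1$. The condition $g_k \sigma_z g_k^\dagger = \sigma_z$ is equivalent to $g_k$ commuting with $\sigma_z$, which forces $b=0$ and gives the diagonal form $\mathrm{diag}(e^{it},e^{-it})$; the condition $g_k \sigma_z g_k^\dagger = -\sigma_z$ is equivalent to $g_k$ anticommuting with $\sigma_z$, which forces $a=0$ and gives the antidiagonal form $\twotwo{0}{-e^{-it}}{e^{it}}{0}$. The main (and only) subtlety is the book-keeping in case~(d), where one must pick a generator whose support on a given qubit is nonzero so that the argument reaches every $k\in\{1,\dots,n\}$; the choices $A^{(k)}-A^{(\ell)}$ above handle this uniformly once $n\geq 2$.
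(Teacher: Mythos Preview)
Your proof is correct and follows the same approach as the paper: use the explicit form of $K_\rho$ in each of the three cases to force $g_kAg_k^\dagger$ into $\R A$, then read off the diagonal/antidiagonal form of $g_k$. The paper's proof is terser and simply asserts that the hypothesis implies $g_kAg_k^\dagger=\pm A$ in each case, whereas you supply the case-by-case verification and invoke norm preservation under the adjoint action to pass from $\R A$ to $\pm A$; this is exactly the missing detail.
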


\section{Stabilizer is all of $LG$}\label{stabLGsect}

The completely mixed state is stabilized by $LG$. In fact this is the
{\em unique} state that has this stabilizer. One can see this as
follows.

  Since $\C^2$ is an irreducible representation of $SU(2)$, it follows
  that $(\C^2)^{\otimes n}$ is an irreducible representation of
  $G=(SU(2))^n$ (see~\cite{b-td:rclg}, Proposition 4.14). By Schur's lemma, any matrix
  that commutes with every $g\in G$ must be a scalar matrix (\cite{b-td:rclg}, Theorem 1.10).

Alternatively, it is straightforward to verify that if $\rho$ has an
off-diagonal element $\rho_{IJ}$ with $I\neq J$, with say $i_k\neq j_k$,
then $\langle I | [A^{(k)},\rho] | J \rangle =(-1)^{i_k}2i \rho_{IJ}\neq 0$
(use~(\ref{sumAaction})). This rules out
nonzero off-diagonal elements. Now suppose $\rho_{II}\neq \rho_{I_kI_k}$
for some $k$, where $I_k$ is the same as $I$ in all positions but
opposite in position $k$.. Then $\langle I | [C^{(k)},\rho] | J \rangle
=i(\rho_{I_kI_k} - \rho_{II}) \neq 0$ (use~(\ref{sumCaction})). We
conclude that $\rho$ is the completely mixed state.

\section{Stabilizer is $\kwerner$}\label{symmwernersect}

Let $\rho$ be an $n$-qubit symmetric mixed state, and suppose that
$K_\rho=\kwerner$. It follows that, on the group level,
we have $\Stab_\rho\supset\{(g,g,\ldots,g)\colon g\in SU(2)\}$. This is the
class of Werner states.  In~\cite{symmwerner} we prove that any
symmetric Werner state must be of the form
$$\rho = \sum_{k=0}^{\lfloor n/2 \rfloor} c_{k} \Sym \left(
u^{\otimes k} \otimes \Id^{\otimes n-2k}
\right)
$$
where 
$$u=\sigma_x\otimes \sigma_x + \sigma_y\otimes \sigma_y + \sigma_z\otimes
  \sigma_z
$$
for some real coefficients $c_{k,\ell}$. Further, we show that any two states
with distinct choices of coefficients $c_{k,\ell}$ are local unitarily inequivalent.



\section{Stabilizer is $\kproduct$}\label{stabu1nsect}

It is clear that if $\rho = \sum_I s_I
\sigma_I$ where $s_I=0$ if $i_k=1,2$ for every $1\leq k\leq n$, then
$K_\rho$ contains $\kproduct$. As long as there is some
$s_I\neq 0$ for $I\neq (0,0,\ldots,0)$, we can apply~(\ref{sumCaction})
to see that $[C^{(k)},\rho]=0$, so we have $K_\rho =\kproduct$ (the only possible
algebra that contains $\kproduct$ is all of $LG$).

Conversely, suppose $K_\rho = \kproduct$. Write
$\rho$ as
\begin{eqnarray*}
\rho &=& \sum_J a_J \sigma_0\otimes \sigma_J
+ \sum_J b_J \sigma_1\otimes \sigma_J\\
&+& \sum_J c_J \sigma_2\otimes \sigma_J
+ \sum_J d_J \sigma_3\otimes \sigma_J
\end{eqnarray*}
where $J$ is a multiindex of length $n-1$. Then we have
$$0=[A_1,\rho]= \sum_J (-2b_J \sigma_3 + 2c_J \sigma_2)\otimes \sigma_J.
$$
It follows that $b_J=c_J=0$ for all $J$. 
Thus we have established
that $K_\rho = \kproduct$ if and only if
$$ \rho = \sum_{k=0}^n c_k \Sym \left(
\sigma_z^{\otimes k} \otimes \Id^{\otimes (n-k)}
\right)
$$
for some real coefficients $c_k$.

In contrast to the case of symmetric Werner states in the previous
section, LU equivalence classes of states whose stabilizer is
$\kproduct$ may contain more than one state. For
example, let $\rho = \Id/8 + a \sigma_z\otimes \sigma_z\otimes
\sigma_z$. Then $\rho' = (iX)^{\otimes 3}\rho (-iX)^{\otimes 3} = \Id/8
-a \sigma_z\otimes \sigma_z\otimes \sigma_z$ is $LU$ equivalent but not
equal to $\rho$, and has the same stabilizer (here $X$ denotes the Pauli
matrix $\sigma_x$ viewed as an element $U(2)$). We claim that this type
of sign change accounts for the entire LU class of a state for $n\geq 3$
qubits. That is, if $\rho,\rho'$ are LU equivalent $n\geq 3$ qubit
states with stabilizer $K_\rho=K_{\rho'}=\kproduct$, and $\rho$ has corresponding polynomial $F_n(\rho)=\sum_k
d_k z^k$, then either $\rho=\rho'$ or $F_n(\rho') = \sum_k (-1)^kd_k
z^k$.  To see why this is true, suppose we have $g\rho g^\dagger =
\rho'$ for some $g=(g_1,\ldots,g_n)$ in $SU(2)^n$. We show
in~\cite{symmstates2} that for $n\geq 3$, there exists an $h\in SU(2)$
such that $\rho'= T_h(\rho) = h^{\otimes n}\rho (h^\dagger)^{\otimes
  n}$. By Lemma~\ref{gAgdaggerispmA} we have $hAh^\dagger=\pm A$. Thus
we have
$R_h(z)=\pm z$. 

Lemma~\ref{gAgdaggerispmA} admits one additional possibility for 2-qubit
states, and that is to have one of $g_1,g_2$ be diagonal, and the other
antidiagonal. An example is conjugation by $\Id \otimes (iX)$ that takes 
$\rho = \frac{\Id}{4} + a\sigma_3\otimes \sigma_3$ to 
$\rho' = \frac{\Id}{4} - a\sigma_3\otimes \sigma_3$. It
is easy to check that this exhausts all possibilities.


\section{Stabilizer is $K_{\rm \small GHZ}$}\label{stabghzlikesect} 

First, we claim that if 
$K_\rho = K_{\rm \small GHZ}$, 
then $\rho_{I,J}=0$ for $J\neq I,I^c$, where $I^c$ denotes the
bitwise complement of $I$. Indeed, if $J\neq I,I^c$, then there exist
qubit indices $k,\ell$ such that $i_k=j_k$ and $i_\ell = j_\ell^c$. Then
we have (apply~(\ref{sumAaction}))
$$0=\langle I | [A^{(k)} - A^{(\ell)},\rho]|J\rangle = \zeta(I,J) = \pm 2i\rho_{I,J}.
$$



Second, we claim that if $\rho_{I,I^c}\neq 0$, then
$\{I,I^c\}=\{00\cdots 0,11\cdots 1\}$. Simply note that for any pair of 
  positions $k,\ell$, we must have 
$$
    0 = (-1)^{i_k} - (-1)^{i_\ell}
$$
so $i_k=i_\ell$.

Let $\ket{\psi(n,k)}=\ket{1}^{\otimes k}\otimes \ket{0}^{\otimes (n-k)}$
and let $\rho({n,k})=\Sym \ket{\psi(n,k)}\bra{\psi(n,k)}$. Let
$\ket{\GHZ({\alpha,\beta})}=\alpha \ket{0}^{\otimes n}+ \beta \ket{1}^{\otimes
  n}$, and let
$\rho_\GHZ(\alpha,\beta)=\ket{\GHZ({\alpha,\beta})}\bra{\GHZ({\alpha,\beta})}$. 
The above
two claims imply that if 
$K_\rho = \kghz$
then $\rho$ can be written as a mixture of the form
\begin{equation}\label{mixedghzform}
\rho = \sum_{k=0}^n c_k\rho({n,k}) + d \rho_\GHZ(\alpha,\beta)
\end{equation}
where $c_0,\ldots,c_n,d$ are nonnegative real numbers that sum to 1.

Now suppose that $\rho,\rho'$ are are LU equivalent $n$-qubit states of
the form~(\ref{mixedghzform}) for some $n\geq 3$. As was the case in the
previous section, there is some $g\in SU(2)$ such that $\rho'=
T_g(\rho)$, and $g$ is either diagonal or antidiagonal. If
$g=\twotwo{e^{it}}{0}{0}{e^{-it}}$, then conjugation by $g^{\otimes n}$
leaves $\rho(n,k)$ fixed and performs a phase operation on
$\rho_\GHZ(\alpha,\beta)$ as follows.
\begin{eqnarray*}
T_g(\rho({n,k})) &=& \rho({n,k})\\
T_g(\rho_\GHZ(\alpha,\beta)) &=& \rho_\GHZ(e^{itn}\alpha,e^{-itn}\beta)
\end{eqnarray*}
If $g=\twotwo{0}{e^{-it}}{e^{it}}{0}$, then $T_g$ interchanges
$\rho(n,k)$ with $\rho(n,n-k)$ and transforms $\rho_\GHZ(\alpha,\beta)$
as follows.
\begin{eqnarray*}
T_g(\rho({n,k})) &=& \rho({n,n-k})\\
T_g(\rho_\GHZ(\alpha,\beta)) &=& \rho_\GHZ(e^{-itn}\beta,e^{itn}\alpha)
\end{eqnarray*}

The considerations of the preceding paragraph imply that if we exercise
LU freedom to choose $\alpha,\beta$ real with $\alpha\geq \beta$,
then~(\ref{mixedghzform}) is a unique representative of its LU class of
states, with the only exception being the case when $\alpha=\beta$.
In this case, $\rho=\sum_k c_k\rho(n,k) +
d\rho_\GHZ(1\sqrt{2},1\sqrt{2})$ and $\rho'=\sum_k c'_k\rho(n,k) +
d\rho_\GHZ(1\sqrt{2},1\sqrt{2})$ are interchanged by $T_{iX}$.

\section{Stabilizer is $\kdicke$}\label{stabu1sect}

Let $\rho$ be an $n$-qubit symmetric mixed state.
We claim that $g^{\otimes n} \rho (g^{\dagger})^{\otimes n} = \rho$
for all \emph{diagonal} $g \in SU(2)$
if and only if
$F_n(\rho)$ is a linear combination of products of
$z^r$ and $(x^2 + y^2)^s$, for $0 \leq r + 2s \leq n$.

As for the case of the single $su(2)$ block, we make use of the
identification of symmetric states with polynomials given in
section~\ref{symmstate2poly}.

If $g$ is a diagonal element of $SU(2)$, then $\Phi(g)$
is a rotation about the $z$~axis.
Polynomials $F_n(\rho) = \sum_{r,s} b_{r,s} z^r (x^2 + y^2)^s$
(where the sum is over nonnegative integers $r$, $s$ such that
$r + 2s \leq n$)
are invariant since $z$ and $x^2 + y^2$ are invariant under rotations
about the $z$~axis.

Conversely, suppose
that $g^{\otimes n} \rho (g^{\dagger})^{\otimes n} = \rho$
for all diagonal $g \in SU(2)$.  Then $F_n(\rho)$ is
invariant under rotations about the $z$~axis.
We now wish to decompose the homogeneous polynomials
in three variables into a sum of irreducible representations
of $U(1)$, and to note the dimension of the trivial
representation, as it gives the space of polynomials
invariant under rotations about the $z$~axis.
Each $V_l$ in (\ref{homopolyirred}) decomposes
into $2l+1$ one-dimensional irreducible representations of $U(1)$,
of which one is the trivial representation.
When expressed as a sum of irreducible
representations of $U(1)$, decomposition (\ref{homopolyirred}) shows that
the homogeneous polynomials of degree $p$
in three variables contain $\lfloor p/2 \rfloor + 1$ dimensions of the trivial
representation.  This is precisely the dimension
of the space of homogeneous polynomials of degree $p$
that can be produced by products of $z^r$ and $(x^2 + y^2)^s$
(\emph{i.e.} the number of nonnegative integer pairs $(r,s)$
for which $r + 2s = p$).
The vector space $\R_n[x,y,z]$ of polynomials
of degree at most $n$ is a direct sum of vector spaces of
homogeneous polynomials of degree $p$, for $0 \leq p \leq n$.
Since we have accounted for the dimensions of each space
of homogeneous polynomials, $F_n(\rho)$
must be a linear combination of the polynomials given.

Now suppose that $\rho,\rho'$ are LU equivalent, with $F_n(\rho)
=\sum_{r,s} b_{r,s} z^r (x^2 + y^2)^s$, and $F_n(\rho')=\sum_{r,s}
b'_{r,s} z^r (x^2 + y^2)^s$.  For $n\geq 3$ qubits, the same
analysis as for symmetric Werner states (Section~\ref{symmwernersect})
yields a diagonal or antidiagonal $g\in SU(2)$ such that
$T_g(\rho)=\rho'$. Since $R_g(x^2+y^2) = x^2+y^2$ and $R_g(z)=\pm z$,
the only nontrivial possibility is that $F_n(\rho')=\sum_{r,s}
(-1)^r b_{r,s} z^r (x^2 + y^2)^s$. We can discard the case $n=2$, because
the only possibilities for $\rho$ are $\rho = \Id/4 + a (\sigma_x\otimes
\sigma_x + \sigma_y\otimes \sigma_y)$ and $\rho = \Id/4 + a
\sigma_z\otimes \sigma_z$. The first of these states has an $su(2)$
block stabilizer, and the second has stabilizer $\kproduct$.

\section{Summary and Conclusion}

Table~\ref{tab:summarytable} shows a summary of LU classes of local
unitary stabilizer algebra types and their corresponding LU classes of
states. Having achieved LU classification for symmetric mixed states, it
is natural to attempt further classes of mixed states. A natural avenue
for investigation is to take the stabilizer structures from
Theorem~\ref{mixedsymmkrhodecomp} and try to classify their
corresponding states, with the assumption of permutation invariance
removed. For example, the case Werner states (stabilizer type~(b)) would
be of interest.

  \begin{center}
\begin{table*}
  \begin{tabular}{c|c|c|c}
{\bf Stabilizer} & {\bf LU Representative} & {\bf LU Nonuniqueness} & {\bf Pure State Example}\\ \hline 
$LG$ & $\Id/2^n$ & unique & none\\ \hline $\kwerner$ &
\raisebox{-3ex}{\shortstack{$\displaystyle \sum_{k=0}^{\lfloor n/2 \rfloor} c_{k}
    \Sym\left((\sigma_x\otimes \sigma_x + \sigma_y\otimes \sigma_y +
    \sigma_z\otimes \sigma_z)^{\otimes k} \otimes \Id^{\otimes (n-2k}\right)$\\ $c_{k}$ real, at least one $c_{k}\neq 0$
    with $k>0$ }} & unique & singlet \\ \hline $\kproduct$ &
\raisebox{-3ex}{\shortstack{$\displaystyle \sum_{k=0}^n c_{k}
    \Sym\left(\sigma_z^{\otimes k} \otimes \Id^{\otimes (n-k)}\right)$
    \\ $c_k$ real, at least one $c_k$ nonzero with $k>0$ }} & $c_k'=(-1)^kc_k$ &
product state\\ \hline $\kghz$ &
\raisebox{-3ex}{\shortstack[c]{$\displaystyle \sum_{k=0}^n c_{k}
    \rho(n,k) + d\rho_{\GHZ}G(\alpha,\beta)$\\ $\alpha\geq \beta>0$ }} &
$c_k'=c_{n-k}$ & GHZ\\ \hline $\kdicke$ &
\raisebox{-3ex}{\shortstack[c]{$\displaystyle \sum_{2r+s\leq n} b_{r,s}
    \Sym\left((\sigma_x\otimes \sigma_x + \sigma_y\otimes
    \sigma_y)^{\otimes s}\otimes \sigma_z^{\otimes r} \otimes \Id^{\otimes n-r-2s}
    \right)$\\ $b_{r,s}$ real, at least one $b_{r,s}\neq 0$ with $r>0$
}} & $b_{r,s}'=(-1)^rb_{r,s}$ & Dicke state
  \end{tabular}
  \caption{\label{tab:summarytable} Summary of LU classes of stabilizers
    and states. Each of the three last stabilizer types has precisely
    two LU inequivalent states of the form given in the column `LU
    Representative'. The column `LU Nonuniqueness' gives conditions on
    the coefficients for the alternative state.}
\end{table*}
  \end{center}

\medskip

{\em Acknowledgments.}  
This work has been supported by National Science Foundation grant
\#PHY-0903690.

\appendix

\section{Proofs for Section~\ref{algstructsection}}

\subsection{Proof of Theorem~\ref{lualgdecomp}}

Let $K$ be a subalgebra of $LG=\bigoplus_{i=1}^n su(2)$.
For each qubit $i$, let $\pi_i\colon LG \to su(2)$ denote the projection
$(M_1,M_2,\ldots, M_i,\ldots,M_n)\to M_i$ onto the $i$th direct summand
of $LG$. Given a set ${\cal S}$ of qubits, let $\pi_{\cal S}$ denote
the projection $\pi_{\cal S} = \oplus_{s\in {\cal S}} \pi_s$ onto
summands in qubits ${\cal S}$. Given $M=(M_1,\ldots,M_n)$ in $LG$, let the
{\em weight of $M$}, denoted $\wt(M)$, be the number of $i$ such that
$M_i$ is not zero.

For each $i\in\{1,2,\ldots,n\}$, let $m(i)$ be the minimum weight of all
elements of $K$ with nonzero weight in position $i$. For $i$ such that
$m(i)> 0$, let
$M(i)\subseteq K$ denote the set of elements of $K$ that have nonzero
projection in position $i$ and have total weight $m(i)$.  That is,
\begin{eqnarray*}
m(i) &=&\min\{\wt(M)\colon M \in K\colon M_i\neq 0\}, \mbox{ and}  \\
M(i) &=& \{M\in K\colon M_i\neq 0 \mbox{ and } \wt(M)=m(i)\}.
\end{eqnarray*}

\begin{proposition}\label{miproj3samenonzeroposn}
Suppose $\dim \pi_i(K)=3$, and let $P,Q$ be elements of $M(i)$.  For
$1\leq j\leq n$ we have $P_j=0$ if and only if $Q_j=0$.
\end{proposition}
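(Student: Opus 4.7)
The plan is to exploit the Lie-bracket structure of $K$ together with the weight-minimality of elements in $M(i)$. Throughout I will use the two basic facts about $su(2)$: for $X,Y\in su(2)$, the bracket $[X,Y]$ vanishes if and only if $X,Y$ are linearly dependent; and when $[X,Y]\neq 0$, it is linearly independent from both $X$ and $Y$ (this is immediate from the cross-product model of $su(2)\cong\mathfrak{so}(3)$). I will also use the elementary weight bound $\wt([P,Q])\le |\mathrm{supp}(P)\cap\mathrm{supp}(Q)|$, since the $k$th coordinate of $[P,Q]$ is $[P_k,Q_k]$.

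First I would dispatch the case where $P_i$ and $Q_i$ are linearly independent. Then $[P,Q]\in K$ has $i$th coordinate $[P_i,Q_i]\neq 0$, so $\wt([P,Q])\ge m(i)$ by the definition of $m(i)$. Combining with the weight bound gives
\[
m(i)\le \wt([P,Q])\le |\mathrm{supp}(P)\cap\mathrm{supp}(Q)|\le |\mathrm{supp}(P)|=m(i),
\]
forcing $\mathrm{supp}(P)\subseteq\mathrm{supp}(Q)$; by symmetry $\mathrm{supp}(P)=\mathrm{supp}(Q)$, which is stronger than the desired conclusion.

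The main obstacle is the case where $P_i$ and $Q_i$ are proportional, since then $[P,Q]_i=0$ and the previous weight trick is useless. The idea is to produce an auxiliary element of $M(i)$ whose $i$th coordinate is independent from $P_i$, and then reduce to the first case. Since $\dim\pi_i(K)=3$, there exists $R\in K$ with $R_i$ linearly independent from $P_i$. Consider $[R,P]\in K$: its $i$th coordinate is $[R_i,P_i]\neq 0$, so the same sandwich of inequalities
\[
m(i)\le \wt([R,P])\le |\mathrm{supp}(R)\cap\mathrm{supp}(P)|\le m(i)
\]
forces $\wt([R,P])=m(i)$, so $[R,P]\in M(i)$. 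Moreover $\pi_i([R,P])=[R_i,P_i]$ is linearly independent from $P_i$, hence (since $Q_i\in\mathrm{span}(P_i)$) also from $Q_i$. Applying the independent case to the pairs $(P,[R,P])$ and $(Q,[R,P])$ yields $\mathrm{supp}(P)=\mathrm{supp}([R,P])=\mathrm{supp}(Q)$, completing the argument.

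The only subtle point is ensuring that an $R\in K$ with $R_i$ not proportional to $P_i$ exists; this is exactly what the hypothesis $\dim\pi_i(K)=3$ supplies. I expect the dependent case to be the main conceptual step, as it requires the ``bracket with a witness" construction rather than a direct argument from $P$ and $Q$ alone.
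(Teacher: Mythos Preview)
Your proof is correct and rests on the same key idea as the paper's: bracket an element of $M(i)$ with something whose $i$th coordinate is independent, then use the sandwich $m(i)\le\wt([\cdot,\cdot])\le|\mathrm{supp}\cap\mathrm{supp}|\le m(i)$ to force the supports to coincide. The only structural difference is that the paper first builds a full orthonormal triple $U,V,W\in M(i)$ (all with the same support as $P$) and then brackets $Q$ against whichever of the three has $i$th coordinate independent from $Q_i$, whereas you split into the independent/dependent cases and in the dependent case use a single auxiliary bracket $[R,P]$ to manufacture an element of $M(i)$ independent from $Q_i$ at position~$i$. Your version is a bit leaner for this proposition in isolation; the paper's extra work of producing the triple $U,V,W$ is not wasted, however, since that triple is reused as the backbone of the next proposition (the $su(2)$-block structure).
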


\begin{proof}
Let $U'$ be an element of $K$ such that $U'_i,P_i$ are independent
elements of $su(2)$.  Let $V=P/|P_i|$, let $W=[U',V]/|[U',V]_i|$, and
let $U=[V,W]$.

Observation 1. We have that $U_i,V_i,W_i$ form an
orthonormal~\footnote{We use a rescaled Hilbert-Schmidt norm
  $|M|=(2\tr(M^\dagger M))^{1/2}$ for an element $M$ of $su(2)$.} basis of $su(2)$ that
satisfy $[U_i,V_i]=W_i$, $[V_i,W_i]=U_i$, and $[W_i,U_i]=V_i$.  This is
easy to see using the fact that $su(2)$ is isomorphic to $\R^3$, with
the Lie bracket operation corresponding to the cross product, and the
(rescaled) Hilbert-Schmidt norm corresponding to the standard norm on $\R^3$.

Observation 2. We have that $U,V,W$ are elements of $M(i)$.  This
follows from the observation that in $K$, we have $\wt([M,N]) \leq
\wt(M)$ for all $M,N$, so $U,V,W$ all have weight less than or equal to
$\wt(P)$, and hence must have weight equal to $\wt(P)=m(i)$.  Further,
we see that $U,V,W$ must have zero and nonzero coordinates in the same
positions as $P$.

By Observation~1, we have that $Q_i$ is not a scalar multiple of at
least one of $U_i,V_i,W_i$.  Without loss of generality,  suppose
$Q_i,U_i$ are linearly independent.  Let $S=[Q,U]$, so by construction,
we have $S_i\neq 0$, $S\in M(i)$, and $S$ has zero and nonzero
coordinates in the same positions as $U$, and hence in the same
positions as $P$.  If $Q_j\neq 0$, then $S_j$ must be nonzero since
$\wt(S)=\wt(Q)=m(i)$ and $S$ has the same zero and nonzero coordinate
positions as $P$, which also has weight $m(i)$.  Thus $U_j\neq 0$, and
therefore $P_j\neq 0$.  Conversely, if $P_j\neq 0$, then $S_j\neq 0$,
and therefore $Q_j\neq 0$.  This concludes the proof.
\end{proof}

\begin{proposition}\label{su2blockproperties} 
Suppose $\dim \pi_i(K)=3$.  Let
  ${\cal B}\subseteq \{1,2,\ldots,n\}$ be the set of positions where all
  the elements of $M(i)$ have their nonzero coordinates, by virtue
  of Proposition~\ref{miproj3samenonzeroposn}.  Then the following hold.
  \begin{enumerate}
  \item [(i)] If $P\in K$ and $P_j=0$ for all
  $j\not\in {\cal B}$, then $P\in M(i)$ or $P=0$.  
\item[(ii)] Any such $P$
  lies in the linear span of $U,V,W$ constructed in the proof
  of Proposition~\ref{miproj3samenonzeroposn}.
\item[(iii)] The elements $U,V,W$ satisfy $[U,V]=W$, $[V,W]=U$,
  $[W,U]=V$ so that the linear span of $U,V,W$ is isomorphic to $su(2)$.
\item[(iv)] $\displaystyle K=\pi_{\cal B}(K) \oplus \pi_{{\cal
    B}^c}(K).$
\item[(v)] $\pi_{\cal B}(K)$ is an $su(2)$ block algebra.
  \end{enumerate}
\end{proposition}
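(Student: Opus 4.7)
The plan is to prove the five items in the order (i), then (ii) and (iii) as immediate corollaries, then (iv), and finally (v), after first establishing an auxiliary fact. Set $K_{\cal B} := \{P \in K : P_k = 0 \text{ for all } k \notin {\cal B}\}$. The auxiliary fact is that $\pi_b(K_{\cal B}) = su(2)$ for every $b \in {\cal B}$. To see this, note that by the same weight-bracket argument used in the proof of Proposition~\ref{miproj3samenonzeroposn}, $[U,V]$ lies in $M(i)$ and therefore has full support ${\cal B}$, so $[U_b, V_b] = [U,V]_b \neq 0$. Hence $U_b$ and $V_b$ are linearly independent in $su(2) \cong \R^3$, and together with $[U_b,V_b]$ they span all of $su(2)$; since $\pi_b$ is a Lie algebra homomorphism, $\pi_b(K_{\cal B}) = su(2)$.

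The heart of the argument is (i). Given $P \in K$ with all coordinates zero outside ${\cal B}$: if $P_i \neq 0$, the bound $\wt(P) \leq |{\cal B}| = m(i)$ combined with minimality of $m(i)$ forces $P \in M(i)$. Suppose instead $P_i = 0$ and, for contradiction, $P \neq 0$, so $P_b \neq 0$ for some $b \in {\cal B} \setminus \{i\}$. Apply Goursat's lemma to the image $\pi_{\{i,b\}}(K_{\cal B}) \subseteq su(2) \oplus su(2)$: since both single-factor projections are surjective and $su(2)$ is simple, this image is either all of $su(2) \oplus su(2)$ or the graph of an automorphism of $su(2)$. In either case one can choose $X \in K_{\cal B}$ with $X_b = -P_b$ and $X_i \neq 0$ (in the graph case, $X_b \neq 0$ forces $X_i \neq 0$). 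Then $X + P \in K$ has nonzero $i$-coordinate, zero $b$-coordinate, and support within ${\cal B} \setminus \{b\}$, so $\wt(X+P) \leq m(i) - 1$, contradicting the minimality defining $m(i)$. Items (ii) and (iii) then follow at once: (i) makes $\pi_i$ restricted to $K_{\cal B}$ injective, and it is surjective by construction, so $K_{\cal B}$ is three-dimensional and equals $\mathrm{span}(U,V,W)$; the bracket relations on $U,V,W$ transfer from those on $U_i, V_i, W_i$ through this Lie algebra isomorphism.

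For (iv), given $P \in K$, lift $P_i \in su(2) = \pi_i(K_{\cal B})$ to the unique $Q \in K_{\cal B}$ with $Q_i = P_i$, and set $R = P - Q$, so $R_i = 0$. If $R_b \neq 0$ for some $b \in {\cal B}$, then $U_b, V_b, W_b$ span $su(2)$ by (ii) together with the auxiliary fact, so at least one of them fails to centralize $R_b$; without loss $[U_b, R_b] \neq 0$. Then $S := [U,R] \in K$ satisfies $S_i = 0$, $S_b \neq 0$, and has support inside ${\cal B}$ (since $U$ does), contradicting (i). Hence $R$ has zero coordinates throughout ${\cal B}$, giving $P = Q + R$ with $Q \in \pi_{\cal B}(K)$ and $R \in \pi_{{\cal B}^c}(K)$. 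For (v), each map $\phi_b := \pi_b \circ (\pi_i|_{K_{\cal B}})^{-1}$ is a surjective Lie algebra homomorphism $su(2) \to su(2)$, hence an automorphism by simplicity, hence inner: $\phi_b = \mathrm{Ad}_{g_b}$ for some $g_b \in SU(2)$. This exhibits $K_{\cal B}$ as a local unitary conjugate of $\Delta_{\cal B}$, so $K_{\cal B}$ is an $su(2)$ block algebra. The main obstacle is the Goursat step in (i): cancelling $P_b$ by simply adding a scalar multiple of $U$ fails because $U_b$ need not be parallel to $P_b$, so one must invoke the Goursat rigidity of subalgebras of $su(2) \oplus su(2)$ with surjective projections to produce an $X \in K_{\cal B}$ that cancels at $b$ while staying nonzero at $i$.
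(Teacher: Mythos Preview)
Your proof is correct, but it diverges from the paper's argument in instructive ways.

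For part~(i) you invoke Goursat's lemma on $\pi_{\{i,b\}}(K_{\cal B})$. This works, but it is heavier than necessary. The paper proceeds more directly using exactly your auxiliary fact: since $U_j,V_j,W_j$ span $su(2)$ (indeed, $U_j=[V_j,W_j]\neq 0$ makes them a basis), one writes $P_j=aU_j+bV_j+cW_j$ and sets $Q=P-aU-bV-cW$. Then $Q_j=0$ while $Q_i=-(aU_i+bV_i+cW_i)\neq 0$ (because $U_i,V_i,W_i$ are orthonormal and $(a,b,c)\neq 0$), so $Q\in M(i)$ with a zero coordinate inside ${\cal B}$, contradicting Proposition~\ref{miproj3samenonzeroposn}. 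Your closing remark that ``cancelling $P_b$ by a scalar multiple of $U$ fails'' is true but beside the point: the paper cancels with a linear combination of $U,V,W$, not a multiple of a single one, and that suffices without any structural classification of subalgebras of $su(2)\oplus su(2)$.

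Conversely, your treatments of (iii) and (iv) are cleaner than the paper's. For (iii) the paper verifies $[U_k,V_k]=W_k$ etc.\ coordinate by coordinate, introducing auxiliary scalars $\alpha_k,\beta_k$ and using~(i) to force them equal to~$1$; your observation that $\pi_i|_{K_{\cal B}}$ is a Lie algebra isomorphism transports the bracket relations from $U_i,V_i,W_i$ to $U,V,W$ in one line. For (iv) the paper expands $[M,X]$ for $X=U,V,W$ in the $U,V,W$ basis and matches coefficients across all $k\in{\cal B}$; your bracket-with-$U$ argument reaches the same contradiction via~(i) more quickly. For (v) the two arguments are essentially the same, yours phrased abstractly (every automorphism of $su(2)$ is inner) and the paper's constructively (explicitly choosing $h_b$ and a phase to align $U_b,V_b,W_b$ with $A,B,C$).
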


\begin{proof}
(i). If $P=0$ or if $P_i\neq 0$ we are done, so suppose $P_i=0$, and let $j\neq i$ be an
 element of ${\cal B}$ such that $P_j\neq 0$.   Let $U,V,W$ be constructed as in the proof
  of Proposition~\ref{miproj3samenonzeroposn}.  Since $U_j,V_j,W_j$ are nonzero,
  they are independent elements of $su(2)$.  Write $P_j=aU_j+
  bV_j+cW_j$, and let $Q=P-aU-bV-cW$.  Then $Q_i=-aU_i-bV_i-cW_i\neq 0$,
  so $Q\in M(i)$.  But $Q_j=0$, and this
  contradicts Proposition~\ref{miproj3samenonzeroposn}.  We conclude that $P_i=0$
  is impossible, so $P\in M(i)$.

(ii). Using (i), for $P\neq 0$ we can write $P_i=aU_i+bV_i+cW_i$ and let $Q=P-aU-bV-cW$.  Then
$Q_i=0$,  so $Q=0$, again by~(i).  Thus $P=aU+bV+cW$.

(iii).  We only need to show that $[U_k,V_k]=W_k$, $[V_k,W_k]=U_k$, and
$[W_k,U_k]=V_k$ for $k\in {\cal B}$.  By construction, we have
    $[V,W]=U$, and therefore we have $[V_k,W_k]=U_k$.  We also have
    $[W_k,U_k]=\alpha_k V_k$ and $[U_k,V_k]=\beta_k W_k$ for some
    scalars $\alpha_k,\beta_k$.  Let $N=[V,[U,V]]$, and let $M=N-U$.  Then
    $N_k=[V_k,\beta_k W_k]=\beta_k U_k$.  Since $M_i=0$, we must have
    $M_k=(\beta_k-1)U_k=0$ by~(i), so we must have $\beta_k=1$ for all
    $k\in{\cal B}$.  A similar argument shows that $\alpha_k=1$ for all
    $k\in{\cal B}$.

(iv). Let $M\in A$, and write $M=P+N$, where $P=\pi_{\cal B}(M)$ and
  $N=M-P$.  We will show that $P\in A$.  We have $[M,X]=[P,X]$ for
  $X=U,V,W$, so we can use~(ii) to write $[P,X]$ as a linear combination
  of $U,V,W$.
  \begin{eqnarray*}
    \left[P,U\right] &=& a_UU+ b_UV + c_UW\\
    \left[P,V\right] &=& a_VU+ b_VV + c_VW\\
    \left[P,W\right] &=& a_WU+ b_WV + c_WW
  \end{eqnarray*}
For $k\in {\cal B}$, we can write $P_k=a_kU_k+b_kV_k+c_kW_k$.  Using the
relations~(iii), we have
  \begin{eqnarray*}
    \left[P_k,U_k\right] &=& -b_kW_k+c_kV_k\\
    \left[P_k,V_k\right] &=& a_kW_k-c_kU_k\\
    \left[P_k,W_k\right] &=& -a_kV_k+b_kU_k.
  \end{eqnarray*}
Equating $k$th coefficients of the first set of equations with
coefficients in the second set of equations yields
\begin{eqnarray*}
  a_k &=& -b_W = c_V \\
  b_k &=& -c_U = a_W\\
  c_k &=& = -a_V = b_U.
\end{eqnarray*}
Since this holds for all $k$, we have $P=c_V U + a_W V + b_U W$, and so
$P\in K$, as desired.

(v). Let $b\in {\cal B}$. Because $U_b,V_b,W_b$ satisfy $[U_b,V_b]=W_b$,
$[V_b,W_b]=U_b$, and $[W_b,U_b]=V_b$, it must be that $U_b,V_b,W_b$ form
an orthonormal basis for $su(2)$ (again use the fact that $su(2)$ with
its Lie bracket and Hilbert-Schmidt norm is isometrically identified
with $\R^3$ with the cross product and standard euclidean norm). Because
the adjoint representation $SU(2)\to SO(su(2))$ is surjective, we may
choose $h_b\in SU(2)$ such that $h_bU_bh_b^\dagger = A$. Now choose a
real number $t_b$ such that $e^{it_b}V_be^{-it_b}=B$. Now let
$g_b=e^{it_b}h_b$. For $b'\not\in{\cal B}$, let $g_{b'}=\Id$. We now
have
\begin{eqnarray*}
(\otimes_i g_i) U (\otimes_i g_i)^\dagger &=& (A,A,\ldots,A)\\
(\otimes_i g_i) V (\otimes_i g_i)^\dagger &=& (B,B,\ldots,B)\\
(\otimes_i g_i) W (\otimes_i g_i)^\dagger &=& (C,C,\ldots,C)
\end{eqnarray*}
and $\pi_{\cal B}(K)$ has been ``aligned'' with $\Delta_{\cal B}$, as desired.
\end{proof}

To complete the proof of Theorem~\ref{lualgdecomp}, 
for each qubit label $i$ for which $\dim \pi_i(K)=3$,
let ${\cal B}\subseteq \{1,2,\ldots,n\}$ be the set of positions where
all the elements of $M(i)$ have their nonzero coordinates, as in
Propositions~\ref{miproj3samenonzeroposn}
and~\ref{su2blockproperties}. Then $\pi_{\cal B}(K)$ is an su(2) block
summand for $K$ by part~(v) of
Proposition~\ref{su2blockproperties}. For any qubit $j$ outside of the qubit
sets for $su(2)$ blocks, we must have $\dim \pi_j(K)=0,1$, for if there
are two independent vectors in $\pi_j(K)$, then there must also be a
third coming from the bracket of the two independent elements, and so
$j$ would be a qubit for an $su(2)$ block. We define
${\cal S}$ to be the set of qubits $j$ for which $\dim \pi_j(K)=1$ and
define ${\cal R}$ to be the remaining qubits $\ell$, where we have $\dim
\pi_\ell (K)=0$. With these definitions, we clearly have the desired
decomposition of Theorem~\ref{lualgdecomp}.

\subsection{Proof of Theorem~\ref{mixedsymmkrhodecomp}}

Let $\rho$ be a symmetric mixed state, and let $K_\rho$ be its local
unitary stabilizer subalgebra. By Theorem~\ref{lualgdecomp}, we can
decompose $K_\rho$ as a direct sum
\begin{equation}\label{krhodecomp}
K_\rho=\left(\bigoplus_{i=1}^p B_i\right) \oplus S
\end{equation}
of $su(2)$ blocks $B_i$ and an algebra $S$ where projections into each
qubit summand are 1-dimensional.

To begin, note that permutation
invariance implies that the dimension of the projection of $K_\rho$ into
any $su(2)$ summand of $LG$ must be the same for all qubits.  Thus we
can have only the zero algebra (possibility~(f) of
Theorem~\ref{mixedsymmkrhodecomp}), an algebra $S$, or a sum of $su(2)$
blocks.  

Next, we claim that there are only two possibilities for a sum of
$su(2)$ blocks. One extreme is to have $n$ $su(2)$ blocks, each in 1
qubit (possibility~(a) of Theorem~\ref{mixedsymmkrhodecomp}). The other
is to have a single $su(2)$ block in all $n$ qubits. We rule out the
intermediate possibilities, that is, having two or more $su(2)$ blocks,
at least one of which involves two or more qubits, as follows. Suppose
${\cal B}_1$ contains qubits $i$ and $j$, with corresponding $su(2)$
block algebra $B_1$, and ${\cal B}_2$ contains qubit $k\neq i,j$, with
corresponding $su(2)$ block algebra $B_2$. Transposing qubits $j,k$ does
not affect $\rho$, so $K_\rho$ also has an $su(2)$ block algebra $B_3$
that contains qubits $i,k$. But then the qubit sets for $B_1$ and $B_3$
both contain $i$. This contradicts the fact that qubit sets for $su(2)$
blocks are disjoint, as shown in Proposition~\ref{su2blockproperties}.

To complete part~(b) of Theorem~\ref{mixedsymmkrhodecomp}, we consider
two cases.

{\em Part (b) case~(i) $n\geq 3$.} We claim that if $K_\rho$ is a single
$su(2)$ block in all $n\geq 3$ qubits, then in fact $K_\rho$ is the
standard $su(2)$ block algebra $\Delta_{\{1,2,\ldots,n\}}$. Suppose on
the contrary that there is an element $M=\sum_i M_i^{(i)}$ in $K_\rho$
with $M_i\neq M_j$. Let $M'=M_j^{(i)} + M_i^{(j)} + \sum_{k\neq
  i,j}M_k^{(k)}$ be the element obtained from $M$ by transposing the
$i,j$ coordinates. Then $M-M'=\left( M_{i} - M_{j}\right)^{(i)} + \left(
M_{j} -M_{i}\right)^{(j)}$ is also in $K_\rho$, but this element has
weight $2<n$. This contradicts Proposition~\ref{su2blockproperties} that
says all elements in an $su(2)$ block must have full weight $n$. We
conclude that a single $su(2)$ block stabilizer for a symmetric mixed
state of $n\geq 3$ qubits must be the standard $su(2)$ block algebra.

{\em Part (b) case~(ii) $n=2$.} We claim that any two-qubit symmetric
mixed state with an $su(2)$-block stabilizer is LU equivalent to another
symmetric state with a standard $su(2)$-block stabilizer.  This LU
equivalence may not be achievable through the same unitary operation on
each qubit.  For example, the pure state $\ket{01} + \ket{10}$ is a
symmetric state with a non-standard $su(2)$-block stabilizer.  It is LU
equivalent to the singlet state $\ket{01} - \ket{10}$, but not through
an LU transformation applied uniformly to each qubit.  In fact, the
singlet is invariant under any LU transformation applied uniformly to
each qubit.

Suppose $\rho$ is a 2-qubit symmetric mixed state whose stabilizer
subalgebra is an
$su(2)$ block subalgebra.  Writing
\[
\rho =  \sum_{i,j=0}^3 s_{ij} \sigma_{i} \otimes
\sigma_{j} ,
\]
we see that
\[
s_{10} = s_{20} = s_{30} = s_{01} = s_{02} = s_{03} = 0
\]
because the three components $(s_{10},s_{20},s_{30})$ transform like a
3-dimensional real vector under the rotations of 3-dimensional space
produced by the first qubit in the $su(2)$ block subalgebra.  Since
$\rho$ is invariant under the action of this $su(2)$ block subalgebra,
it must be that $(s_{10},s_{20},s_{30}) = 0$.  A similar argument holds
for the second qubit.  By performing singular value decomposition on the
real $3 \times 3$ matrix of elements $s_{i_1i_2}$ with $i_1$ and $i_2$
ranging over $\{1,2,3\}$, we can find a local unitary transformation
such that $\rho' = (g_1 \otimes g_2) \rho (g_1 \otimes g_2)^\dagger$ has
the form
\[
\rho' = \frac{1}{4} \left( \sigma_0 \otimes \sigma_0
 + a \sigma_1 \otimes \sigma_1
 + b \sigma_2 \otimes \sigma_2
 + c \sigma_3 \otimes \sigma_3 \right) .
\]

Let
\[
M = \alpha_1 A^{(1)} + \beta_1 B^{(1)} + \gamma_1 C^{(1)}
 + \alpha_2 A^{(2)} + \beta_2 B^{(2)} + \gamma_2 C^{(2)} .
\]
This $M$ stabilizes $\rho'$ if and only if
\begin{align*}
\alpha_1 c &= \alpha_2 b \\
\alpha_1 b &= \alpha_2 c \\
\beta_1 a &= \beta_2 c \\
\beta_1 c &= \beta_2 a \\
\gamma_1 b &= \gamma_2 a \\
\gamma_1 a &= \gamma_2 b .
\end{align*}
Since there is, by assumption, an element $M$ with $\alpha_1$,
$\beta_1$, $\gamma_1$,
$\alpha_2$, $\beta_2$, and $\gamma_2$ all nonzero, we must have
\[
\abs{a} = \abs{b} = \abs{c}.
\]

If $a$, $b$, and $c$ are not identical, then two have the same
sign and the last has the opposite sign, and we can do a
unitary transformation on one qubit that produces a $\pi$
rotation about the $x$, $y$, or $z$ axis, resulting
in an LU equivalent state that is symmetric with the
standard $su(2)$ block subalgebra as its stabilizer. This completes
part~(b), case~(ii).

Now we suppose that $K_\rho=S$, so that the projection of $K_\rho$
into each $LG$ summand $su(2)$ is 1-dimensional.  We wish to show that
one of possibilities~(c)--(e) of the statement of the Theorem holds. We
do this by cases.

If $K_\rho$ contains a weight one element $M^{(k)}$, then by permutation
invariance, $K_\rho$ must contain $M^{(j)}$ for $1\leq k\leq n$. This
establishes possibility~(c).

Now suppose that $K_{\rho}$ contains no weight one elements. There are
two cases: there may exist a nonzero element $M=(M_{1},M_2,\ldots,M_n)=
\sum_k M_k^{(k)}$ in $K_\rho$ such that $M_i\neq M_j$ for some $i,j$, or
there may not. Suppose the first case holds.  For any pair $i,j$ of
qubit positions, we have the element 
$M'=M_j^{(i)} + M_i^{(j)} + \sum_{k\neq i,j}M_k^{(k)}$, 
obtained by transposing coordinates $i,j$,
also in $K_\rho$.  Therefore we have 
$M-M'=\left( M_{i} - M_{j}\right)^{(i)} + \left( M_{j} -M_{i}\right)^{(j)}$ 
also in
$K_\rho$.  Let $N=M_i-M_j$.  We have $n-1$ linearly independent weight
two elements $N^{(1)}-N^{(k)}$, $2\leq k\leq n$, so the dimension of
$K_\rho$ is at least $n-1$.  If the dimension of $K_{\rho}$ is greater
than $n-1$, then there is an element in $K_{\rho}$ of the form
$M''=\sum_{k=1}^{n} a_{k} N^{(k)}$ that does not lie in the span of the
elements $N^{(i)}-N^{(j)}$.  But then we would have the weight one
element $\left(\sum_{k=1}^n a_k\right)N^{(k)} = M''+\sum_{k=2}^{n} a_{i}
\left(N^{(1)}-N^{(k)} \right)$ in $K_{\rho}$, which contradicts our
assumption. We conclude that $K_{\rho}$ is the real span of
$\{N^{(1)}-N^{(i)}\}_{i=2}^n$. We may take $N$ to have norm 1, and we may choose $g\in SU(2)$
such that $gNg^\dagger = A$. Let $\rho'=g^{\otimes n} \rho (g^{\otimes
  n})^\dagger$. Then 
$K_{\rho'}$ is the real span of $\{A^{(i)}-A^{(j)}\}_{1\leq
  i,j\leq n}$, which is the stabilizer~(d) of Theorem~\ref{mixedsymmkrhodecomp}.

The last case to consider is where $M_{i} = M_{j}$ for all $i,j$, for
all $M=(M_1,M_2,\ldots,M_n)$ in $K_\rho=S$.  Let $M=(N,\ldots,N)$ be a
nonzero element in $K_\rho$.  Normalize $N$ and choose $g\in SU(2)$ to
diagonalize $N$ so we have $gNg^\dagger = A$.  Then the state
${\rho'}=g^{\otimes n}\rho(g^\dagger)^{\otimes n}$ has
1-dimensional stabilizer that is the real span of
$A^{(1)}+A^{(2)}+\cdots +A^{(n)}$. This is type~(e) in
Theorem~\ref{mixedsymmkrhodecomp}.

This concludes the proof of Theorem~\ref{mixedsymmkrhodecomp}.

\subsection{Proof of Lemma~\ref{gAgdaggerispmA}}
 
Suppose that $K_\rho$ is a stabilizer of type~(c), (d), or~(e), in
Theorem~\ref{mixedsymmkrhodecomp}, and that
$$(g_1\otimes g_2\otimes \cdots \otimes g_n)K_\rho (g_1\otimes
g_2\otimes \cdots \otimes g_n)^\dagger = K_\rho
$$
for some $g_1,\ldots,g_n$ in $SU(2)$.
For $M$ in any of the three stabilizer types, the assumption that 
$(\otimes_k g_k)M(\otimes_k(g_k^\dagger) \in K_\rho$ implies that
$g_kAg_k^\dagger = \pm A$ for each $k$. It is easy to check directly that if
$g=\twotwo{a}{-\overline{b}}{b}{\overline{a}}$ satisfies $gAg^\dagger =
\pm A$, then either $a$ or $b$ must be zero. Thus we conclude that $g$ is
either diagonal or antidiagonal, as desired.

\end{document}